\newtheorem{theorem}{Theorem}
\newtheorem{lemma}[theorem]{Lemma}
\theoremstyle{definition}
\newtheorem{definition}[theorem]{Definition}
\DeclareMathOperator{\Diag}{Diag}
\newcommand{\ket}[1]{\mathinner{|{#1}\rangle}}
\definecolor{dgreen}{rgb}{0,0.6,0}
\newcommand{\bubble}{
\raisebox{-4mm}{
  \begin{tikzpicture}[x=1mm, y=1mm, line width=0.3mm]
  \draw (5, 5) circle (5);
  \end{tikzpicture}
  }
}
\newcommand{\crossing}{ 
  \raisebox{-4mm}{
  \begin{tikzpicture}[x=1mm, y=1mm, line width=0.3mm]
    \draw (0,0) -- (10,10);
    \draw (10,0) -- (6,4);
    \draw (4,6) -- (0,10);
  \end{tikzpicture}
  }
}
\newcommand{\cupcap}{ 
  \raisebox{-4mm}{
  \begin{tikzpicture}[x=1mm, y=1mm, line width=0.3mm]
    \draw (0,10) to [out=320, in=180] (5,7) to [out=0, in=220] (10,10);
    \draw (0,0) to [out=40, in=180] (5,3) to [out=0, in=140] (10,0);
  \end{tikzpicture}
  }
}
\newcommand{\cupIcap}{ 
  \raisebox{-4mm}{
  \begin{tikzpicture}[x=1mm, y=1mm, line width=0.3mm]
    \draw (0,10) to [out=320, in=180] (5,7)
      to [out=0, in=220] (10,10);
    \draw (0,0) to [out=40, in=180] (5,3) to [out=0, in=140] (10,0);
    \draw (5,3) -- (5,7);
  \end{tikzpicture}
  }
}
\newcommand{\idtwo}{ 
  \raisebox{-4mm}{
  \begin{tikzpicture}[x=1mm, y=1mm, line width=0.3mm]
    \draw (0,0) to [out=50, in=320] (0,10);
    \draw (10,0) to [out=130, in=220] (10,10);
  \end{tikzpicture}
  }
}
\newcommand{\ketzero}{ 
  \begin{tikzpicture}[x=1mm, y=1mm, line width=0.3mm]
    \draw (0,8) to [out=280, in=180] (2,4) to [out=0, in=260] (4,8);
    \draw (6,8) to [out=280, in=180] (8,4) to [out=0, in=260] (10,8);
  \end{tikzpicture}
}
\newcommand{\ketone}{ 
  \raisebox{-2mm}{
  \begin{tikzpicture}[x=1mm, y=1mm, line width=0.3mm]
    \draw (0,8) to [out=280, in=180] (2,4) to [out=0, in=260] (4,8);
    \draw (6,8) to [out=280, in=180] (8,4) to [out=0, in=260] (10,8);
    \draw (2,4) to [out=280, in=260] (8,4);
  \end{tikzpicture}
  }
}
\newcommand{\measureone}{ 
  \raisebox{-4mm}{
  \begin{tikzpicture}[x=1mm, y=1mm, line width=0.3mm]
    \draw[dgreen, line width = 0.4mm]
      (6,7)arc(75:-255:4 and 2);
    \draw (5,0) -- (5,2);
    \draw (5,4) -- (5,10);
  \end{tikzpicture}
  }
}
\newcommand{\measurezero}{ 
  \raisebox{-1mm}{
  \begin{tikzpicture}[x=1mm, y=1mm, line width=0.3mm]
    \draw[dgreen, line width = 0.4mm] (5,5) ellipse (4 and 2);
  \end{tikzpicture}
  }
}
\newcommand{\jumpingjack}{
  \begin{tikzpicture}[x=0.18mm, y=0.18mm, line width=0.3mm]
    \draw (25,0) -- (25,83);
    \draw (25,93) -- (25,111);
    \draw (36,0) -- (36,83);
    \draw (36,111) -- (36,93);
    \draw (58,0) .. controls (58,20)
      and (60,47) .. (86,47) .. controls (115,47)
      and (115,18) .. (115,0);
    \draw (115,111) .. controls (115,94)
      and (115,61) .. (86,61) ..  controls (65,61)
      and (58,79) ..  (58,83);
    \draw (58,93) -- (58,111);
    \draw (68,0) .. controls (68,18)
      and (72,36) ..  (86,36) .. controls (101,36)
      and (104,15) .. (104,0);
    \draw (68,111) -- (68,93);
    \draw (68,83) .. controls (68,79)
      and (79,72) .. (86,72) .. controls (104,72)
      and (104,98) .. (104,111);
    \draw (86,61) -- (86,47);
    \draw (137,0) -- (137,111);
    \draw (148,0) -- (148,111);
    \draw[dgreen, line width = 0.4mm] ($(47,95) +(52:47 and 8)$) arc (52:-232:47 and 8);
    \draw[dgreen, line width = 0.4mm] ($(47,95) +(83:47 and 8)$) arc (83:97:47 and 8);
  \end{tikzpicture}
}
\newcommand{\littleancilla}{
  \begin{tikzpicture}[x=0.26mm, y=0.26mm, line width=0.3mm]
    \draw (18,76) -- (18,60);
    \draw (18,53) -- (18,0);
    \draw (29,76) -- (29,59);
    \draw (29,52) -- (29,0);
    \draw (43,76) -- (43,59);
    \draw (43,52) -- (43,35);
    \draw (43,28) -- (43,0);
    \draw (54,76) -- (54,60);
    \draw (54,53) -- (54,34);
    \draw (54,27) -- (54,0);
    \draw (86,76) -- (86,32);
    \draw (86,24) -- (86,0);
    \draw (97,76) -- (97,32);
    \draw (97,24) -- (97,0);
    \draw (155,0) -- (155,76);
    \draw (166,0) -- (165,76);
    \draw (115,33) .. controls (115,65)
      and (148,61) .. (148,58) ..  controls (148,51)
      and (140,51) ..  (144,47);
    \draw (122,36) .. controls (122,54) and (140,58) .. (144,54);
    \draw (148,51) .. controls (151,47) and (140,43) .. (144,40);
    \draw (148,43) .. controls (151,40) and (140,36) .. (144,33);
    \draw (148,36) .. controls (151,33) and (140,29) .. (144,25);
    \draw (148,29) .. controls (155,22) and (122,15) .. (122,28);
    \draw (148,22) .. controls (153,16) and (115,4) .. (115,27);
    \draw[dgreen, line width = 0.4mm] ($(36,62) +(52:36 and 6)$) arc (52:-232:36 and 6);
    \draw[dgreen, line width = 0.4mm] ($(36,62) +(85:36 and 6)$) arc (85:95:36 and 6);
    \draw[dgreen, line width = 0.4mm] ($(83,36) + (25:48 and 8)$) arc (25:-205:48 and 8);
    \draw[dgreen, line width = 0.4mm] ($(83,36) + (54:48 and 8)$) arc (54:69:48 and 8);
    \draw[dgreen, line width = 0.4mm] ($(83,36) + (93:48 and 8)$) arc (93:119:48 and 8);
  \end{tikzpicture}
}
\newcommand{\threedancers}{
  \begin{tikzpicture}[x=0.18mm, y=0.18mm, line width=0.3mm]
    \draw (40,0) .. controls (40,75) and (29,11) .. (29,84);
    \draw (29,93) -- (29,111);
    \draw (29,0) .. controls (29,5) and (29,32) .. (32,39);
    \draw (36,47) .. controls (40,55) and (40,79) .. (40,83);
    \draw (40,92) -- (40,111);
    \draw (119,111) .. controls (119,36) and (61,43) .. (61,0);
    \draw (72,0) .. controls (72,36) and (108,36) .. (108,0);
    \draw (94,39) .. controls (104,29) and (119,14) .. (119,0);
    \draw (61,84) .. controls (61,72) and (79,54) .. (86,47);
    \draw (61,92) -- (61,111);
    \draw (72,85) .. controls (72,65) and (108,54) .. (108,111);
    \draw (72,93) -- (72,111);
    \draw (151,0) .. controls (151,79) and (140,3) .. (140,111);
    \draw (148,47) .. controls (151,53) and (151,107) .. (151,111);
    \draw (140,0) .. controls (140,6) and (140,32) .. (144,39);
    \draw[dgreen, line width = 0.4mm] ($(50,95) + (52:42 and 7)$) arc (52:-232:42 and 7);
    \draw[dgreen, line width = 0.4mm] ($(50,95) + (83:42 and 7)$) arc (83:97:42 and 7);
  \end{tikzpicture}
}
\newcommand{\friends}{
  \begin{tikzpicture}[x=0.15mm, y=0.15mm, line width=0.3mm]
    \draw (25,130) -- (25,115);
    \draw (25,106) -- (25,0);
    \draw (36,130) -- (36,114);
    \draw (36,105) -- (36,0);
    \draw (54,115) -- (54,130);
    \draw (54,105) .. controls (54,97)
      and (61,86) .. (68,79);
    \draw (83,65) .. controls (90,58)
      and (94,60) .. (94,51) .. controls (94,45)
      and (54,11) ..  (54,0);
    \draw (65,115) -- (65,130);
    \draw (75,87) .. controls (68,94)
      and (65,101) .. (65,105);
    \draw (90,72) .. controls (105,60)
      and (105,58) .. (105,51) ..  controls (105,40)
      and (65,11) ..  (65,0);
    \draw (94,130) .. controls (94,83)
      and (54,60) .. (54,51) .. controls (54,40)
      and (58,39) .. (68,29);
    \draw (94,0) .. controls (94,6)
      and (90,8) .. (83,15);
    \draw (105,130) .. controls (105,79)
      and (65,60) .. (65,51) .. controls (65,45)
      and (69,43) .. (76,36);
    \draw (90,22) .. controls (97,15)
      and (105,11) .. (105,0);
    \draw (122,130) -- (122,0);
    \draw (133,130) -- (133,0);
    \draw[dgreen, line width = 0.4mm] ($(45,116) + (50:36 and 6)$) arc (50:-230:36 and 6);
    \draw[dgreen, line width = 0.4mm] ($(45,116) + (84:36 and 6)$) arc (84:96:36 and 6);
  \end{tikzpicture}
}
\begin{document}

\title{Realizing an exact entangling gate using Fibonacci anyons}
\author{Stephen Bigelow and Claire Levaillant}
\date{January 2018}
\maketitle

\begin{abstract}
  Fibonacci anyons are attractive
  for use in topological quantum computation
  because any unitary transformation of their state space
  can be approximated arbitrarily accurately by braiding.
  However there is no known braid that
  entangles two qubits
  without leaving the space spanned by the two qubits.
  In other words,
  there is no known ``leakage-free'' entangling gate made by braiding.
  In this paper,
  we provide a remedy to this problem
  by supplementing braiding with measurement operations
  in order to produce an exact controlled rotation gate on two qubits.
\end{abstract}

\section{Introduction}

The topological approach to quantum computation was first proposed by Alexei
Kitaev \cite{k03}. A quantum computer would store qubits in the state space of
a collection of non-Abelian quasi-particles. Fibonacci anyons are one of the
simplest such quasi-particles. They are complete for quantum computation in the
sense that any unitary operation on the state of a collection of Fibonacci
anyons can be approximated arbitrarily well by braiding \cite{flw02}.

In this paper, we use measurement of collective charge and fusion of pairs of
anyons, in addition to the usual braiding.
We are then able to avoid the problem of {\em leakage},
which is when the state of the anyons
leaves the space we use to encode qubits.

\begin{theorem} \label{MAIN}
  Using a certain encoding of qubits with Fibonacci anyons,
  any unitary operation on a collection of qubits
  can be approximated arbitrarily accurately and without leakage. 
\end{theorem}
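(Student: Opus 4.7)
The plan is to combine three ingredients: a qubit encoding in which intra-qubit braiding is automatically leakage-free, the density theorem of Freedman--Larsen--Wang for single-qubit gates, and an exact leakage-free entangling gate obtained by supplementing braiding with a collective-charge measurement. First I would fix the encoding that is promised in the statement. The natural choice (and, I expect, the one used in the rest of the paper) is to take each qubit to be a cluster of four Fibonacci anyons constrained to have trivial total charge; the two-dimensional fusion space of such a cluster then serves as the computational subspace of one qubit. Because braiding the four anyons of one qubit leaves the total charge of that qubit invariant, any such braid acts on the qubit space with no leakage whatsoever.

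Next I would handle single-qubit gates. Restricted to the two-dimensional representation coming from one qubit, the image of the four-strand braid group is known to be dense in $SU(2)$ by the result of Freedman--Larsen--Wang cited as \cite{flw02}. Hence every single-qubit unitary can be approximated to arbitrary accuracy by a braid word acting on the four anyons of a single qubit, and by the remark above every such braid word is manifestly leakage-free. Quantitatively one can invoke Solovay--Kitaev to keep the braid length under control, but for the statement of Theorem~\ref{MAIN} density alone suffices.

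The main work, and the step I expect to be by far the hardest, is to produce a single exact entangling two-qubit gate that is leakage-free. Pure braiding cannot do this: on eight Fibonacci anyons the braid group image is dense in a unitary group on a space strictly larger than the two-qubit computational subspace, and essentially every nontrivial braid moves a computational state partially out of that subspace. The proposal is therefore to build a controlled rotation by alternating braids with measurements of the collective charge of a carefully chosen subset of anyons, using the measurement projectors to kill the leaked components and force the net action back into the computational subspace. The delicate part is to arrange these ingredients so that (a) the gate is genuinely entangling, (b) the outcome is deterministic and exact rather than merely probabilistic, which almost certainly forces a repeat-until-success or measurement-conditioned correction scheme whose branches all terminate with the same computational-subspace unitary, and (c) every intermediate state lies in a subspace on which the subsequent operation is again leakage-free. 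Designing this gate explicitly is the technical content of the paper; for the purposes of Theorem~\ref{MAIN} I would simply assume such an exact controlled rotation has been constructed.

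With dense single-qubit gates and one exact leakage-free entangling two-qubit gate in hand, I would finish by invoking the standard universality result that any two-qubit gate outside the Clifford-like ``locally equivalent to identity or SWAP'' locus, together with arbitrary single-qubit unitaries, generates a dense subgroup of the unitary group on any number of qubits. Since every gate in the resulting circuit is either a braid inside one qubit or an instance of the measurement-assisted entangling gate between two qubits, and both are leakage-free by construction, the composition is leakage-free as well. This gives Theorem~\ref{MAIN}.
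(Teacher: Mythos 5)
Your proposal matches the paper's argument: the paper proves Theorem~\ref{MAIN} exactly by encoding each qubit in four anyons of trivial total charge, using the density result of \cite{flw02} for leakage-free single-qubit braiding, constructing the exact measurement-assisted entangling gate $CR(2\pi/5)$ as Theorem~\ref{THEOREM}, and invoking the universality result of \cite{dnbt02}. The entangling-gate construction you defer to is precisely the content of Theorem~\ref{THEOREM}, so your reduction is the same one the paper makes.
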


Explicitly,
a qubit will be encoded in the state of four anyons
that have zero collective charge.
We will make a certain controlled rotation gate on two qubits.

\begin{theorem} \label{THEOREM}
  There is a protocol that exactly performs
  the controlled rotation gate $CR(2 \pi/5)$
  on a pair of qubits made of Fibonacci anyons.
  It uses braiding,
  measurement of collective charge,
  and fusion of pairs of anyons
\end{theorem}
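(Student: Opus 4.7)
The plan is to exhibit an explicit diagram, built from braiding of Fibonacci anyons, from the creation of an ancilla pair in the vacuum fusion channel, and from collective charge measurements, and then to verify by a direct computation in the Fibonacci TQFT that the resulting operation acts on the two-qubit subspace exactly as $CR(2\pi/5)$, with no leakage component.

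First I would fix the encoding promised just before the theorem: each qubit is carried by four Fibonacci anyons of trivial total charge, with its logical value determined by the fusion channel of an inner pair. The two-qubit computational subspace thus sits inside the larger eight-anyon state space, and it is precisely the coupling to the orthogonal non-computational directions that makes pure braiding unable to produce an exact entangling gate. Because the $R$-matrix eigenvalues of a single Fibonacci crossing are already tenth roots of unity, the specific angle $2\pi/5$ is a natural target: it should fall out of a small, carefully chosen product of braid generators once the leakage has been killed by measurement.

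Next I would introduce an ancilla pair produced from the vacuum, so that its collective charge is certified to be trivial, and braid it with the eight computational anyons in a specific pattern. The natural building blocks are the diagrams already pre-defined in the preamble (the jumping-jack, little-ancilla, three-dancers, and friends pictures), each of which is a short braid followed by a collective charge measurement. When the measurement reads the vacuum outcome, the net operator on the qubit subspace equals, up to a compensating braid, exactly $CR(2\pi/5)$. To turn this probabilistic statement into an exact protocol I would wrap it in a forced-measurement loop: if the outcome is the non-trivial $\tau$ channel, a short recovery braid returns the state to a position from which the measurement can be attempted again, so that the overall protocol terminates with probability one and always produces the same unitary.

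The main obstacle is the algebraic verification that glues everything together. Each intermediate state must be written in a chosen $F$-basis of the Fibonacci model, propagated through every crossing using the $R$-matrix, and projected through every branch of every charge measurement. One then has to check two things at once: that the off-diagonal leakage amplitudes cancel exactly on the non-computational sector, and that the surviving diagonal amplitudes multiply to $e^{2\pi i/5}$ on $\ket{11}$ while giving $1$ on $\ket{00}$, $\ket{01}$, and $\ket{10}$. Keeping this bookkeeping straight across the $\tau$-branches of the forced-measurement loop is the delicate part; once every diagram has been correctly evaluated, identifying the overall unitary with $CR(2\pi/5)$ is a routine check.
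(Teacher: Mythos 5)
There is a genuine gap at the heart of your plan: the claim that when the charge measurement gives the unwanted outcome, ``a short recovery braid returns the state to a position from which the measurement can be attempted again.'' This is exactly the step that cannot work. In the paper's protocol the two outcomes of the final measurement do not split into ``success'' and ``recoverable failure''; they apply two \emph{different} diagonal entangling gates $G_1$ and $G_2$ to the data qubits, and the unwanted one, $G_2$, cannot be undone by braiding --- if it could, braiding alone would give a leakage-free entangling gate, which is the very obstruction the paper is working around. Forced measurement (in the sense of Bonderson--Freedman--Nayak) is used in the paper only for the intermediate fusions, where alternating charge measurements can drive a pair into the vacuum channel with probability one; it does not apply to the final branching measurement. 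The paper's actual resolution is a random walk: one shows that a conjugated ancilla lets you randomly apply $G_1^{-1}$ or $G_2^{-1}$, that all these gates commute, and that $G_1$ has order five, so repeated applications perform a walk on the ten states $G_1^k G_2^l$ ($0 \le k < 5$, $l \in \{0,1\}$) that terminates at $G_1$ with probability one. That idea is absent from your proposal, and without it (or a substitute) the protocol is not exact.

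A second, related gap is the ancilla. A single pair created from the vacuum does not carry enough information: the angle $2\pi/5$ does not ``fall out'' of a short braid after leakage is suppressed. In the paper it is encoded in a three-qubit, twelve-anyon ancilla
$\ket{\Gamma} = \phi^{1/2}\cos\frac{2\pi}{5}\,\ket{\alpha 0 \alpha} - i \sin\frac{2\pi}{5}\,\ket{\beta 1 \beta}$,
whose middle qubit is prepared by the braid $\sigma_1^2\sigma_2^{-2}$ and whose preparation requires probabilistically realized Bell states, $X$ gates, and $CZ$ gates (acceptable because failed ancilla preparation can simply be restarted offline). The outer qubits of $\ket{\Gamma}$ are fused onto the inputs, and only then does measuring the middle qubit produce $G_1$ or $G_2$; finally $R^{-2}\otimes R^{-2}$ converts $G_1$ into $CR(2\pi/5)$. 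Your proposal correctly identifies the ingredients (diagram evaluation, charge measurement, recovery procedures for failed fusions) but is missing both the ancilla that carries the rotation angle and the random-walk mechanism that makes the gate exact rather than merely probable.
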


The controlled rotation gate $CR(2 \pi/5)$ is capable of entangling two qubits.
Any entangling gate on two qubits, together with the ability to approximate
single qubit gates by braiding, is sufficient to approximate any quantum
computation on any number of qubits \cite{dnbt02}.
Thus Theorem \ref{THEOREM} implies Theorem \ref{MAIN}.

\section{Background}

We assume the reader is familiar with the basic terminology of anyonic systems \cite{w10}. We work with {\em Fibonacci anyons}. There is only one
kind of non-trivial quasi-particle, namely a Fibonacci anyon, which has quantum dimension
$$\phi = \frac{1 + \sqrt{5}}{2}.$$
We will denote this quasi-particle by $1$ and the trivial particle by $0$. The only non-trivial fusion rule is
$$1 \times 1 = 0 + 1,$$
meaning that two anyons can fuse to either a single anyon or to the vacuum. The only non-trivial $R$ and $F$ matrices are
$$
R =
\begin{pmatrix}
  e^{-i 4\pi/5} & 0 \\
  0 & e^{i 3\pi/5}
\end{pmatrix},
\qquad F =
\begin{pmatrix}
  \phi^{-1} & \phi^{-1/2} \\
  \phi^{-1/2} & -\phi^{-1}
\end{pmatrix}.
$$

We use diagrams to represent the dynamics of Fibonacci anyons. Time progresses upwards. A local maximum represents a pair of anyons fusing to the
vacuum. A local minimum represents a pair of anyons with trivial total charge being created out of the vacuum. A trivalent vertex can represent
two anyons fusing to a single anyon, or one anyon unfusing into two anyons. A measurement projects the state of a group of anyons to a state with
total charge either $0$ or $1$. We represent measurements that project to a trivial charge by a horizontal ellipse around the anyons being
measured. Diagrams are subject to the following relations.
\begin{eqnarray*}
  \bubble &=& \phi, \\
  \crossing &=& e^{3 \pi i/5} \idtwo + e^{-3 \pi i/5} \cupcap, \\
  \cupIcap &=& \phi^{1/2} \idtwo - \phi^{-1/2} \cupcap,
\end{eqnarray*}
 $$\measurezero = 1, \qquad \measureone = 0.$$

We store a qubit in the state of four anyons with trivial collective charge. We denote the two usual basis elements using ``ket'' notation as
follows.
$$ \ket{0} = \ketzero, \qquad \ket{1} = \ketone. $$
We represent the state of a qubit by a pair of complex numbers
$$(a,b) = a\ket{0} + b\ket{1}$$
and a gate on a single qubit by a two-by-two matrix.

The tensor product operation places qubits side by side. We abbreviate a tensor product of qubits as a single ket containing a sequence of
symbols. For example,
$$\ket{0} \otimes \ket{0} = \ket{00}.$$

We represent the state of a pair of qubits by a four-tuple of complex numbers
$$(a,b,c,d)
  = a \ket{00} + b \ket{01} + c \ket{10} + d \ket{11}.$$
We represent a gate on a pair of qubits by a four-by-four matrix. All four-by-four matrices in this paper will be diagonal. Important gates for us
will be the controlled rotation gates $CR(\theta)$, including the controlled $Z$ gate $CZ = CR(\pi)$.
$$
CR(\theta) = \Diag(1, 1, 1, e^{i \theta}), \qquad CZ = \Diag(1, 1, 1, -1).
$$
The state of a system is only defined up to multiplication by a non-zero scalar, and a gate is really a projective transformation.

\section{Preparing the ancilla}

An {\em ancilla} is a collection of anyons that have been put into a known state. This state is designed to be useful when the anyons participate in an operation on
input qubits. Ancillas can be prepared ahead of time.

In this section, we describe a procedure to prepare a collection of three qubits in a certain state. We use braiding, measurement of collective charge, and fusions of
pairs of anyons. In general, when we perform a measurement or fusion during a quantum computation, we need a ``recovery procedure'' in case the
outcome is not the desired one. When we prepare an ancilla, we can simply start the preparation again whenever an outcome is a failure. Thus, we
do not worry about recoveries.

First, we describe some qubits and gates that can be made by braiding alone. Let $\sigma_k$ denote the elementary braid that exchanges anyons $k$
and $k+1$ by a positive half twist.

\begin{lemma}
The following gates can be made by braiding.
  $$
  R =
  \begin{pmatrix}
    e^{-i 4\pi/5} & 0 \\
    0 & e^{i 3\pi/5}
  \end{pmatrix},
  \qquad
  Z =
  \begin{pmatrix}
    1 & 0 \\
    0 & -1
  \end{pmatrix}.
  $$
\end{lemma}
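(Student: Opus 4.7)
The plan is to unpack how an elementary braid acts on the four-anyon qubit and then exhibit explicit braid words realizing the two gates.

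The qubit lives in the two-dimensional space of four anyons with trivial collective charge, drawn so that anyons $1$ and $2$ fuse to a definite charge which must agree with the fusion channel of anyons $3$ and $4$ (for the total charge to vanish). The state $\ket{0}$ corresponds to both pairs fusing to the vacuum and $\ket{1}$ to both pairs fusing to a single Fibonacci anyon. In this basis the braid generator $\sigma_1$ exchanging anyons $1$ and $2$ is diagonal, acting by the Fibonacci $R$-matrix whose first diagonal entry is the phase for the vacuum fusion channel and whose second is the phase for the nontrivial channel. It therefore multiplies $\ket{0}$ by $e^{-4i\pi/5}$ and $\ket{1}$ by $e^{3i\pi/5}$. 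This identification alone shows that the single braid $\sigma_1$ implements the gate $R$.

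For $Z$, I would look for a power of $R$ that equals $\Diag(1,-1)$ on the nose. The diagonal entries of $R^k$ are $e^{-4ik\pi/5}$ and $e^{3ik\pi/5}$, so their ratio is $e^{7ik\pi/5}$; the smallest positive $k$ for which $e^{-4ik\pi/5}=1$ and $e^{3ik\pi/5}=-1$ simultaneously is $k=5$. Hence $\sigma_1^5$ realizes $Z$ exactly, namely as a fivefold positive half-twist of the first pair of anyons.

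There is no real obstacle here. The only thing to confirm is the convention that the first diagonal entry of $R$ is the phase for the vacuum channel, so that the action of $\sigma_1$ on the qubit basis agrees literally with the matrix displayed in the background section; but this is the definition of $R$, and the encoding was chosen precisely so that braiding the leftmost pair acts diagonally on $\ket{0}$ and $\ket{1}$. The rest is arithmetic in the exponents.
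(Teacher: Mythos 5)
Your proposal is correct and arrives at exactly the braid words the paper uses: $R = \sigma_1$ and $Z = \sigma_1^5$. The paper's proof states these identities without further comment, so your additional verification that $\sigma_1$ acts diagonally via the $R$-matrix and that $k=5$ is the right power is just a fuller writeup of the same argument.
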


\begin{proof}
They are $R = \sigma_1$ and $Z = \sigma_1^5$.
\end{proof}

\begin{lemma}
  The following qubits can be made by braiding alone.
  $$\ket{\alpha} = (1, \phi^{1/2}),$$
  $$\ket{\beta} = (1, -\phi^{1/2}).$$
\end{lemma}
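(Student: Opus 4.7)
The observation that drives the proof is the golden-ratio identity $\phi^{-1}+1=\phi$, which makes $F\ket{\alpha}=(\phi^{-1}+1,\,\phi^{-1/2}-\phi^{-1/2})^T=(\phi,0)^T$. Hence $\ket{\alpha}$ is, up to a scalar, the state in which the pairs $(2,3)$ and $(1,4)$ each fuse to the vacuum --- geometrically the ``nested vacuum pairs'' configuration, in contrast to the side-by-side configuration of $\ket{0}$.

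The cleanest realization is therefore to prepare the four anyons so that the pair destined to become anyons $1$ and $4$ is created from vacuum first (outer cup) and then the pair destined to become $2$ and $3$ is created inside it (inner cup). This planar diagram uses no crossings beyond the two pair creations, and it directly represents the $(23)(14)$-vacuum; the $F$-matrix computation above identifies it as a nonzero scalar multiple of $\ket{\alpha}$. Alternatively, starting from $\ket{0}$ one can apply the short braid $\sigma_1\sigma_2$ (meaning first $\sigma_2$, then $\sigma_1$): in the $(12)(34)$ basis $\sigma_1$ acts as $R=\Diag(e^{-4\pi i/5},e^{3\pi i/5})$ from the preceding lemma, while $\sigma_2=FRF$ by an $F$-move about the braided pair, and one multiplies out $\sigma_1\sigma_2\ket{0}$ and checks that its two components are in the ratio $1:\phi^{1/2}$.

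For $\ket{\beta}=(1,-\phi^{1/2})=Z\ket{\alpha}$, I would simply append $Z=\sigma_1^5$ from the previous lemma to whichever preparation I chose for $\ket{\alpha}$, since $Z$ negates the second component.

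The main obstacle I anticipate is verifying the precise ratio $\phi^{1/2}$ in the braid approach (or equivalently the clean cancellations in the $F$-matrix computation). After expansion the braid check collapses to the identity $e^{3\pi i/5}(e^{-4\pi i/5}-e^{3\pi i/5})=\phi$, which follows from $e^{-4\pi i/5}=-e^{\pi i/5}$ together with the classical equality $2\cos(\pi/5)=\phi$. Once this identity is in hand the rest is routine matrix arithmetic and tracking of the overall scalar under the paper's projective equivalence of states.
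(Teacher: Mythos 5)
Your proposal is correct and matches the paper's proof, which simply exhibits the braids $\ket{\alpha} = \sigma_1\sigma_2\ket{0}$ and $\ket{\beta} = \sigma_1^{-4}\sigma_2\ket{0}$ (your $Z\ket{\alpha} = \sigma_1^{6}\sigma_2\ket{0}$ is projectively the same braid word since $\sigma_1^{10}$ acts trivially on the qubit). Your nested-cups observation via $F\ket{\alpha} = (\phi,0)$ is exactly the remark the paper makes immediately after the lemma.
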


\begin{proof}
  They are $\ket{\alpha} = \sigma_1 \sigma_2 \ket{0}$
  and $\ket{\beta} = \sigma_1^{-4} \sigma_2 \ket{0}$.
\end{proof}

Note that $\ket{\alpha}$ is two cups, one nested inside the other.  It can be thought of as a sort of sideways identity diagram.  Similarly,
$\ket{\beta}$ is like a sideways identity diagram but with the addition of five half twists.

\begin{lemma}
  There is a protocol that has a non-zero probability
  of creating the Bell state $\ket{\Psi^+} = (0,1,1,0)$.
\end{lemma}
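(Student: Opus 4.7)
The plan is to begin from a product state that can be built by braiding alone and then use a measurement of collective charge (and possibly a fusion) to project onto $\ket{\Psi^+}$ with positive probability. By the previous lemma, the state $\ket{\alpha}\otimes\ket{\alpha}$ is preparable by braiding. Expanding in the computational basis gives
$$\ket{\alpha}\otimes\ket{\alpha} = \ket{00} + \phi^{1/2}\ket{01} + \phi^{1/2}\ket{10} + \phi\,\ket{11},$$
so the symmetric combination $\ket{01}+\ket{10}$ is already present with coefficient $\phi^{1/2}$. The task therefore reduces to killing the $\ket{00}$ and $\ket{11}$ components while preserving the symmetry between $\ket{01}$ and $\ket{10}$.

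The next step is to measure the collective charge of the four middle anyons (those straddling the boundary between the two qubits). In $\ket{00}$ these anyons have trivial collective charge, in $\ket{01}$ and $\ket{10}$ they have non-trivial charge, and in $\ket{11}$ the $F$-matrix forces a superposition of the two possibilities. Conditioning on the non-trivial-charge outcome annihilates $\ket{00}$ and leaves a state of the form $\phi^{1/2}(\ket{01}+\ket{10}) + c\,\ket{11}$, where $c$ is an explicit $F$-matrix entry. Since the measurement respects the swap symmetry between the two qubits, the coefficients of $\ket{01}$ and $\ket{10}$ remain equal.

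To eliminate the residual $\ket{11}$ component, I would apply a second projection — either another measurement of collective charge on a complementary subset of anyons (for instance on a group that has trivial charge in $\ket{01}+\ket{10}$ but non-trivial charge in $\ket{11}$), or a fusion that similarly separates the two sectors. Combining the two outcomes, the surviving state is a scalar multiple of $\ket{\Psi^+}$. The probability of the joint success outcome is then obtained by stacking the squared norms of the two projection amplitudes, each computed from $F$ and the relations $\measurezero = 1$, $\bubble = \phi$; since none of the relevant matrix entries vanishes, the probability is strictly positive.

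The main obstacle is pinning down a concrete second projection that removes $\ket{11}$ without also removing part of the $\ket{01}+\ket{10}$ subspace; this is delicate because any measurement that distinguishes $\ket{01}$ from $\ket{10}$ individually would break the Bell-state superposition. The cleanest way around this is to choose subsets that are invariant under the swap of the two qubits, so that $\ket{01}$ and $\ket{10}$ are assigned identical charge distributions and the symmetric superposition is left intact. Once a symmetric projection with the right discriminating action on $\ket{11}$ is identified, the remaining verification is a routine application of the $F$-matrix and the diagrammatic rules.
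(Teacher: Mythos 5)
Your high-level strategy is the same as the paper's: start from $\ket{\alpha\alpha} = (1,\phi^{1/2},\phi^{1/2},\phi)$ and apply two swap-symmetric projections, one annihilating $\ket{00}$ and one annihilating $\ket{11}$, leaving a multiple of $(0,1,1,0)$. The paper implements this with two explicit diagrams $D_1$ and $D_2$ (combinations of braiding with charge measurements) whose actions are computed diagrammatically to be $\Diag(0,1,1,\phi^{-1/2})$ and $\Diag(-\phi,1,1,0)$, so that $D_1 D_2 \ket{\alpha\alpha} = \phi^{1/2}(0,1,1,0)$. Your proposal never produces such operators, and that is a genuine gap, not just a missing routine verification: you explicitly leave the second projection unspecified, and finding a charge measurement that kills $\ket{11}$ while fixing both $\ket{01}$ and $\ket{10}$ is exactly the hard part (for instance, the collective charge of anyons $1$ through $6$ is $1$ in $\ket{01}$ but $0$ in $\ket{10}$, so naive candidates break the swap symmetry you need). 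The resolution in the paper is to interleave braids with the measurements, which is why $D_1$ and $D_2$ are nontrivial diagrams rather than bare charge projections.

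There is also a concrete error in the one step you do specify. Measuring the collective charge of the four middle anyons does not act on the two-qubit space as a diagonal matrix $\Diag(0,1,1,c)$. That charge operator fails to commute with the total charge of anyons $1$--$4$ (and of $5$--$8$), so while it annihilates $\ket{00}$ and fixes $\ket{01}$ and $\ket{10}$, its action on $\ket{11}$ produces a state with a component \emph{outside} the two-qubit subspace: the post-measurement state is $\phi^{1/2}(\ket{01}+\ket{10}) + \phi\, P\ket{11}$ with $P\ket{11}$ not proportional to $\ket{11}$. This leakage would have to be either prevented (as the paper does, by measuring groups that respect the qubit structure after suitable braiding) or explicitly shown to be removed by the second projection; as written, the claimed intermediate state $\phi^{1/2}(\ket{01}+\ket{10}) + c\ket{11}$ is incorrect.
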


\begin{proof}
  Let $D_1$ and $D_2$ be the diagrams shown in Figure \ref{fig:noninj}.
  \begin{figure}[ht]
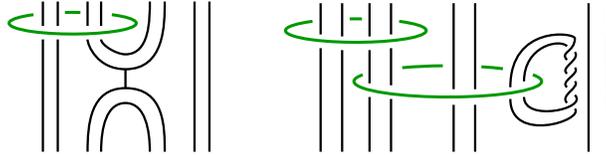

    \centering
    \jumpingjack \qquad \littleancilla
    \caption{Diagrams $D_1$ and $D_2$}
    \label{fig:noninj}
  \end{figure}
  By a diagrammatic calculation,
  these diagrams perform the following operations.
  $$
  D_1 = \Diag(0,1,1,\phi^{-1/2}),
  \qquad
  D_2 = \Diag(-\phi,1,1,0).
  $$
  The Bell state is then as follows.
  $$\ket{\Psi^+}
    = D_1 D_2 \ket{\alpha \alpha}.$$
\end{proof}

\begin{lemma}
  There is a protocol that has a non-zero probability
  of performing a Pauli-X gate
  $X = \begin{pmatrix} 0 & 1 \\ 1 & 0 \end{pmatrix}$.
\end{lemma}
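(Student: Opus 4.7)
My approach is teleportation through the Bell pair from the previous lemma. Starting with the input $\ket\psi = a\ket0 + b\ket1$ and the ancillary $\ket{\Psi^+}$, the joint three-qubit state expands in the Bell basis on the first two qubits as
\[
\ket\psi \otimes \ket{\Psi^+}
= \tfrac12\ket{\Phi^+}\otimes X\ket\psi
+ \tfrac12\ket{\Phi^-}\otimes XZ\ket\psi
+ \tfrac12\ket{\Psi^+}\otimes\ket\psi
+ \tfrac12\ket{\Psi^-}\otimes Z\ket\psi,
\]
where $\ket{\Phi^\pm} = \ket{00} \pm \ket{11}$ and $\ket{\Psi^\pm} = \ket{01} \pm \ket{10}$. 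Projecting the first two qubits onto $\ket{\Phi^+}$ therefore leaves $X\ket\psi$ on the third qubit, and the projection has success amplitude $\tfrac12$ independent of $\ket\psi$. The lemma thus reduces to constructing a braiding-and-measurement protocol that realizes the projector onto $\ket{\Phi^+}$ on two qubits.

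The natural way to build this projector, by analogy with the preceding Bell-state lemma, is to read the diagram $D_1 D_2$ upside-down: since it creates $\ket{\Psi^+}$ from $\ket{\alpha\alpha}$ with a specific non-zero amplitude on each basis vector, its time reversal post-selects two incoming qubits onto a state proportional to $\ket{\Psi^+}$. To turn this into a $\ket{\Phi^+}$-projector, I would pre-compose with a braid on one qubit that interchanges the $\ket{\Phi}$ and $\ket{\Psi}$ branches of the Bell basis, up to irrelevant phases. Since $R = \sigma_1$, $Z = \sigma_1^5$, and $F$-moves on the internal fusion trees of a qubit are already available by braiding alone, one can assemble a concrete such word without any appeal to $X$ itself, avoiding circularity.

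The main obstacle is the explicit diagrammatic verification: exhibiting a specific composite of braids, fusions, and measurement ellipses whose action on the input-plus-ancilla state is a non-zero scalar multiple of $X$ on the post-selected branch. This is routine bookkeeping with the $R$, $F$, bubble, crossing, and cap-cup relations listed in the Background, of the same flavor as the derivation of $D_1 = \Diag(0,1,1,\phi^{-1/2})$ and $D_2 = \Diag(-\phi,1,1,0)$, and I expect no conceptual surprises beyond organizing the calculation. Once the projector is exhibited, the overall success probability is $1/4$ times the success probability of the Bell-pair preparation, which is positive; this delivers the required non-zero probability of performing $X$.
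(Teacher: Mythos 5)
Your opening move --- teleporting $\ket\psi$ through $\ket{\Psi^+}$ and observing that a Bell-type projection on the first two qubits leaves $X\ket\psi$ on the third --- is the right idea and is close in spirit to what the paper does. But your proposed implementation of the projector contains a circularity you have not actually escaped. You correctly note that projecting onto $\ket{\Psi^+}$ itself (the time reversal of $D_1 D_2$) returns $\ket\psi$ unchanged, so you need the $\ket{\Phi^+}$ branch, and you propose to reach it by pre-composing with ``a braid on one qubit that interchanges the $\ket\Phi$ and $\ket\Psi$ branches of the Bell basis, up to irrelevant phases.'' A short computation shows that any single-qubit unitary $V$ with $(V\otimes I)\ket{\Phi^+}\propto\ket{\Psi^+}$ must have both diagonal entries equal to zero, i.e.\ it is $X$ composed with a diagonal phase gate. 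No braid is known to realize such a gate exactly --- the braid group image on one qubit is only dense, and the unavailability of an exact braided $X$ is precisely why this lemma exists --- so the claim that $R$, $Z$ and $F$-moves let you assemble the branch swap ``without any appeal to $X$ itself'' is unsubstantiated and almost certainly false. Nor are the phases irrelevant: projecting onto $e^{i\delta}\ket{00}+e^{i\gamma}\ket{11}$ yields $\Diag(e^{-i\gamma},e^{-i\delta})\,X\ket\psi$, and exact phase corrections are again not available by braiding.

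The paper avoids all of this by never building a Bell projector in your sense. It simply fuses the four anyons of the input qubit pairwise to the vacuum with the four anyons of the \emph{left} qubit of $\ket{\Psi^+}$, in four successive fusions. Diagrammatically this is the evaluation (nested caps) map: it annihilates the cross terms $\ket{0}\otimes\ket{1}$ and $\ket{1}\otimes\ket{0}$ automatically, because $\ket0$ and $\ket1$ are distinct fusion channels, and pairs $\ket0$ with $\ket0$ and $\ket1$ with $\ket1$. Combined with the off-diagonal form $\ket{\Psi^+}=(0,1,1,0)$, this leaves exactly $X\ket\psi$ on the surviving right qubit, succeeding with non-zero probability (namely when all four fusions go to the vacuum). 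If you replace your $\ket{\Phi^+}$-projector step with this direct fusion, the rest of your argument goes through; as written, the essential step is the one you defer as ``routine bookkeeping,'' and it does not exist in the form you describe.
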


\begin{proof}
  Place $\ket{\Psi^+}$ to the right of the input qubit.
  Take the adjacent pair of anyons
  from the input qubit and $\ket{\Psi^+}$
  and fuse them to the vacuum.
  Do this a total of four times
  until the input qubit has completely fused with
  the left qubit of $\ket{\Psi^+}$.
  The remaining right qubit of $\ket{\Psi^+}$
  is then the result of applying $X$ to the original input qubit.
\end{proof}

\begin{lemma}
  There is a protocol that has a non-zero probability
  of performing the CZ gate $\Diag(1,1,1,-1)$.
\end{lemma}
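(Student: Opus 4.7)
The plan is to exhibit an explicit protocol, in the same style as the preceding lemmas, and verify by a diagrammatic calculation that it implements $CZ$ on the nose. Since the previous lemmas have already given us (i) the Bell state $\ket{\Psi^+} = (0,1,1,0)$ with nonzero probability, (ii) a probabilistic Pauli-$X$ via the fusion-teleportation recipe, and (iii) the deterministic braid $Z = \sigma_1^5$, the most natural route is gate teleportation. I would construct a four-qubit ancilla equivalent to $(CZ \otimes I)(\ket{\Psi^+}_{13} \otimes \ket{\Psi^+}_{24})$, place it beside the two input qubits, and then run the $X$-teleportation fusion protocol of the previous lemma independently on each input with its partner half of the ancilla. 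When every fusion projects to the vacuum (which occurs with positive probability), the $CZ$ baked into the ancilla is teleported onto the outputs.

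To build the ancilla, I would try to adapt the construction of $\ket{\Psi^+}$. The states $\ket{\alpha}=(1,\phi^{1/2})$ and $\ket{\beta}=(1,-\phi^{1/2})$ differ by a sign in exactly the $\ket{1}$ component, and that sign is precisely the ingredient that distinguishes $CZ$-applied-Bell-states from plain Bell states. Concretely, I would attempt to show that replacing one $\ket{\alpha}$ by $\ket{\beta}$ in the input to the $D_1 D_2$ protocol yields a Bell-like state $(0,1,-1,0)$ (or a related sign pattern), and then organize two such modified Bell pairs so that the teleportation recipe sends $\ket{ij}\mapsto (-1)^{ij}\ket{ij}$. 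Alternatively, I would hunt for diagrams $E_1, E_2$ analogous to $D_1, D_2$ such that $E_1 E_2 = \Diag(1,1,1,-1)$ up to a nonzero scalar, acting directly on the two input qubits without an intermediate teleportation step.

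The main obstacle is engineering the $-1$ in the correct sector without killing any of the other three. Single braids on a two-anyon pair produce only the eigenvalues $e^{-i4\pi/5}$ and $e^{i3\pi/5}$, so the only way to land on $\pm 1$ is via fifth powers, which give $\Diag(1,-1)$ on any pair whose total charge determines a two-dimensional fusion space. The bookkeeping issue is to find a pair of anyons inside the encoded qubits whose collective charge is $1$ exactly in the $\ket{11}$ sector and $0$ in the other three, and then to sandwich the appropriate power of a braid between measurements that neither forbid nor collapse the other sectors. The success probability is automatic once the net operator is verified to be proportional to $\Diag(1,1,1,-1)$, since that matrix has nonzero norm on any generic input.
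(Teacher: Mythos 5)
Your primary route---gate teleportation through an ancilla of the form $(CZ \otimes I)(\ket{\Psi^+}\otimes\ket{\Psi^+})$---has a genuine gap, in fact two. First, it is circular: preparing that ancilla requires applying the very $CZ$ gate you are trying to construct. Second, your proposed escape (swapping an $\ket{\alpha}$ for a $\ket{\beta}$ to get sign-modified Bell pairs, then ``organizing two such pairs'' to teleport $\ket{ij}\mapsto(-1)^{ij}\ket{ij}$) cannot work even in principle. A tensor product of two Bell-like pairs, each modified by local operations, is a product state across the two teleportation channels, and teleportation through it implements only a product of single-qubit gates, i.e.\ a phase of the form $(-1)^{f(i)+g(j)}$. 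The entangling phase $(-1)^{ij}$ is not of this form; the resource state that teleports $CZ$ is necessarily entangled across the two channels, which puts you back at the circularity. Your third idea---a fifth power of a braid on a pair of anyons whose collective charge is $1$ exactly in the $\ket{11}$ sector---also fails, because in the four-anyon encoding no subset of anyons has a collective charge that is deterministic in each basis sector in the way you need; this is an obstruction, not bookkeeping.

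Your fallback---hunt for diagrams whose composite is proportional to $\Diag(1,1,1,-1)$---is essentially what the paper does, but you leave it entirely unexecuted, and the execution is the whole content of the proof. The paper exhibits two explicit measurement-and-fusion diagrams acting as $D_3 = \Diag(1,1,1,-\phi^{-1})$ and $D_4 = \Diag(1,1,1,-\phi^{-2})$. Neither is proportional to $CZ$, and the key trick you are missing is how to dispose of the unwanted powers of $\phi$: conjugate $D_4$ by $X\otimes X$, $X\otimes I$, and $I\otimes X$ (using the probabilistic $X$ gate from the preceding lemma) to move the $-\phi^{-2}$ into each of the first three diagonal slots, then compose all three with $D_3^2 = \Diag(1,1,1,\phi^{-2})$. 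The product is $-\phi^{-2}\,\Diag(1,1,1,-1)$, which is $CZ$ up to an overall scalar---the $\phi$-powers are spread uniformly over all four entries so that they become a harmless global factor. Without exhibiting such diagrams and this cancellation mechanism, the proposal does not establish the lemma.
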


\begin{proof}
  Consider the two diagrams shown in Figure \ref{fig:nonunitary}.
  \begin{figure}[ht]
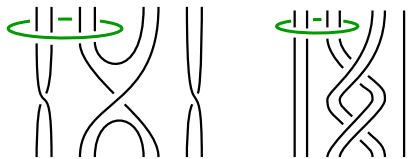

    \centering
    \threedancers \qquad \friends
    \caption{Diagrams $D_3$ and $D_4$}
    \label{fig:nonunitary}
  \end{figure}
  They perform the following two operations on a pair of qubits.
  $$
  D_3 = \Diag(1,1,1,-\phi^{-1}),
  \qquad
  D_4 = \Diag(1,1,1,-\phi^{-2}).
  $$

  We can use the $X$ gate to permute the entries of $D_4$.
  \begin{eqnarray*}
    (X \otimes X) D_4 (X \otimes X) &=& \Diag(-\phi^{-2},1,1,1) \\
    (X \otimes I) D_4 (X \otimes I) &=& \Diag(1,-\phi^{-2},1,1) \\
    (I \otimes X) D_4 (I \otimes X) &=& \Diag(1,1,-\phi^{-2},1)
  \end{eqnarray*}
  Compose the above three operations,
  and also $D_3^2$,
  to obtain the $CZ$ gate,
  up to an overall scalar.
\end{proof}

\begin{definition}
  We define the following three-qubit ancilla.
  $$\ket{\Gamma}
    = \phi^{1/2} \cos \frac{2 \pi}{5} \ket{\alpha 0 \alpha}
    - i \sin \frac{2 \pi}{5} \ket{\beta 1 \beta}$$
\end{definition}

\begin{lemma}
  There is a protocol that has a non-zero probability
  of producing the ancilla $\ket{\Gamma}$.
\end{lemma}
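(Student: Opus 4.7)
The plan is to build $\ket{\Gamma}$ in two stages, separating the entanglement structure from the trigonometric coefficients. Since $\ket{\beta}=Z\ket{\alpha}$, the two summands of $\ket{\Gamma}$ are related by a GHZ-like pattern: whenever the middle qubit is $\ket{1}$, the outer qubits each pick up a $Z$. I would first prepare the product state $\ket{\alpha\alpha\alpha}$ by pure braiding, and then apply the $CZ$ gate (available with nonzero probability by the previous lemma) on qubits $(1,2)$ and $(2,3)$. A direct computation, using $\ket{\alpha}=\ket{0}+\phi^{1/2}\ket{1}$ and $CZ\ket{\alpha\alpha}=\ket{\alpha 0}+\phi^{1/2}\ket{\beta 1}$, shows that
$$CZ_{1,2}\,CZ_{2,3}\,\ket{\alpha\alpha\alpha}=\ket{\alpha 0\alpha}+\phi^{1/2}\ket{\beta 1\beta}.$$
This already has the correct support on $\ket{\alpha 0\alpha}$ and $\ket{\beta 1\beta}$; only the coefficients differ from those of $\ket{\Gamma}$.

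The second stage is to reshape these coefficients into $\phi^{1/2}\cos(2\pi/5)$ and $-i\sin(2\pi/5)$. Since the first and third qubits are perfectly correlated with the middle qubit in the stage-one state, it suffices to apply the operator $\Diag(\phi^{1/2}\cos(2\pi/5),\,-i\sin(2\pi/5)/\phi^{1/2})$ on the middle qubit alone. This operator is not unitary (its two entries have different magnitudes), so following the template of $D_1,\ldots,D_4$ in the preceding lemmas, I would realize it probabilistically: introduce auxiliary anyons, braid them with the middle qubit, measure a carefully chosen collective charge, and post-select on a particular outcome. The resulting matrix entries should then be verified by a diagrammatic calculation using $\bubble=\phi$, the crossing identity, and $\cupIcap=\phi^{1/2}\idtwo-\phi^{-1/2}\cupcap$.

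The main obstacle is the phase $-i$. Braiding confined to a single qubit produces only tenth roots of unity (powers of $R$), so the fourth root of unity $-i=e^{-i\pi/2}$ cannot arise that way; it must come from a braid that entangles the middle qubit's anyons with auxiliary anyons, combined with a measurement outcome that isolates the correct complex combination. Equally delicate is arranging the surviving amplitudes to be exactly $\cos(2\pi/5)$ and $\sin(2\pi/5)$. Finding such a diagram and verifying by calculation that it implements $\Diag(\phi^{1/2}\cos(2\pi/5),\,-i\sin(2\pi/5)/\phi^{1/2})$ is the crux of the proof; once it is in hand, applying it to the stage-one state produces $\ket{\Gamma}$ with nonzero probability, as required.
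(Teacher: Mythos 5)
Your first stage is exactly the paper's structural idea: conjugating the middle qubit's $\ket{0}/\ket{1}$ components by $CZ$ on each adjacent pair turns $\ket{\alpha\,?\,\alpha}$ into the correlated combination of $\ket{\alpha 0\alpha}$ and $\ket{\beta 1\beta}$, because $Z\ket{\alpha}=\ket{\beta}$. But your second stage is a genuine gap: you reduce the lemma to implementing the non-unitary single-qubit operator $\Diag\bigl(\phi^{1/2}\cos\frac{2\pi}{5},\,-i\phi^{-1/2}\sin\frac{2\pi}{5}\bigr)$ and then explicitly defer its construction (``finding such a diagram \dots is the crux of the proof''). As written, the argument is incomplete at precisely the step that carries all the content; nothing in the earlier lemmas hands you this particular non-unitary operator, and the lemma requires the coefficients exactly, not approximately.

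The gap also rests on a misconception. You assert that braiding confined to a single qubit produces only powers of $R$, hence only tenth roots of unity, so the phase $-i$ must come from entangling with auxiliary anyons. That is false: a qubit here is four anyons, $\sigma_1$ and $\sigma_2$ do not commute ($\sigma_2$ acts as $FRF$), and the image of the braid group is dense in $PU(2)$ by \cite{flw02}. In fact the paper exploits an exact identity: the braid $\sigma_1^{2}\sigma_2^{-2}$ applied to $\ket{0}$ yields precisely $\bigl(\phi^{1/2}\cos\frac{2\pi}{5},\,-i\sin\frac{2\pi}{5}\bigr)$ up to an overall scalar. The fix is therefore to reverse your order of operations: prepare the middle qubit in this state by braiding alone, place it between two copies of $\ket{\alpha}$, and only then apply the two $CZ$ gates. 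No non-unitary correction, measurement, or post-selection is needed beyond what the $CZ$ lemma already provides.
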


\begin{proof}
  Create the qubit
  $$\sigma_1^{2} \sigma_2^{-2} \ket{0} = \left(
      \phi^{1/2} \cos \frac{2 \pi}{5}, -i \sin \frac{2 \pi}{5}
    \right).$$
  Place that qubit in between two copies of the qubit $\ket{\alpha}$.
  Now perform a $CZ$ gate on the left two qubits,
  and then another $CZ$ gate on the right two qubits.
\end{proof}

\section{Performing the controlled rotation gate}

In this section, we describe how to perform the gate $CR(2 \pi/5)$ on a pair of qubits.
We will use an unlimited supply of ancillas $\ket{\Gamma}$
from the previous section.

A basic operation is to ``fuse'' two qubits into one, by annihilating two pairs of adjacent anyons, leaving two anyons from each of the original
qubits. The first stage of our protocol is to fuse the left and right qubits of $\ket{\Gamma}$ with the left and right input qubits of the gate.
Once this is done, the remaining qubit of the ancilla is used to apply one or the other of two diagonal entangling gates.

\begin{lemma}
  Suppose we are given two input qubits
  and an ancilla in the state $\ket{\Gamma}$.
  There is a protocol that,
  with non-zero probability,
  fuses the left and right input qubits
  with the left and right qubits of $\ket{\Gamma}$, respectively.
  If it fails,
  there is a recovery procedure
  to restore the input qubits to their original state.
\end{lemma}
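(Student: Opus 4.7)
My plan is to perform four vacuum measurements, two on each side of $\ket{\Gamma}$, and then verify by a diagrammatic calculation that a successful outcome leaves the three surviving qubits in the expected entangled state.

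Arrange the input qubits on either side of the ancilla, so that the twenty anyons sit in the linear order: four anyons of $\psi_L$, four anyons of the left qubit of $\Gamma$, four anyons of the middle qubit of $\Gamma$, four anyons of the right qubit of $\Gamma$, and four anyons of $\psi_R$. On the left boundary, measure the total charge of the rightmost anyon of $\psi_L$ together with the leftmost anyon of the left qubit of $\Gamma$; if the outcome is the vacuum, measure the total charge of the next such pair inward. Two successful vacuum outcomes leave four strands that form the fused left qubit. Do the same on the right boundary.

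To compute the resulting state, expand
\[
\ket{\Gamma} = \phi^{1/2}\cos(2\pi/5)\,\ket{\alpha 0 \alpha} - i\sin(2\pi/5)\,\ket{\beta 1 \beta}
\]
and analyze each summand. On $\ket{\alpha 0 \alpha}$ the key diagrammatic fact is that $\ket{\alpha}$ is a pair of nested cups, i.e., a sideways identity diagram: the two vacuum caps on a given side, composed with the two cups of $\ket{\alpha}$, reroute the input worldlines onto the two surviving strands in the same order, so on that term the fused qubits occupy the states $\ket{\psi_L}$ and $\ket{\psi_R}$. On $\ket{\beta 1 \beta}$, the identity $\ket{\beta} = \sigma_1^{-5}\ket{\alpha} = Z\ket{\alpha}$ inserts a $Z$ braid into each sideways identity, giving $Z\ket{\psi_L}$ and $Z\ket{\psi_R}$. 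Summing the two contributions, the post-fusion state of the three remaining qubits is
\[
\phi^{1/2}\cos(2\pi/5)\,\ket{\psi_L}\ket{0}\ket{\psi_R} - i\sin(2\pi/5)\,(Z\ket{\psi_L})\ket{1}(Z\ket{\psi_R}),
\]
which is nonzero for every choice of inputs, so the four vacuum outcomes jointly occur with positive probability.

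The step I expect to be the main obstacle is the recovery procedure, since a charge-$1$ outcome leaves a stray anyon in the diagram and simultaneously disturbs the corresponding input qubit. The plan is to intervene as soon as any failure is detected: apply the probabilistic $X$ gate of the $\ket{\Psi^+}$ lemma, using one or more fresh Bell pairs, to teleport the corrupted half of the input into a clean copy, absorbing the stray charge into the teleportation caps. A successful pair of such teleportations composes to $X^2 = I$ and returns the input to its original state; we then discard the exposed ancilla and restart the protocol with a fresh $\ket{\Gamma}$. Since each individual step succeeds with positive probability, iterating this recovery restores the inputs to their original state with probability tending to one, as the lemma demands.
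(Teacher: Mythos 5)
Your analysis of the success branch is sound and consistent with the paper: the sideways-identity reading of $\ket{\alpha}$ and the $Z$-twisted reading of $\ket{\beta}$ give exactly the post-fusion state the paper relies on, and its nonvanishing gives the required non-zero success probability. (The paper additionally uses the forced-measurement trick of Bonderson--Freedman--Nayak to make the \emph{first} fusion on each side succeed with probability one, so that only the second fusion on each side can fail; you omit this, which is harmless for the ``non-zero probability'' clause but matters below because it pins down exactly where a failure can occur.)

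The genuine gap is the recovery procedure. Teleportation cannot restore a corrupted input: it is a linear operation, so it faithfully transfers whatever state the anyons are in, corruption included; the failure mode here is not a unitary $X$-type error that two teleportations would cancel via $X^2 = I$, but a fusion outcome of charge $1$ that leaves the input anyons entangled with the remaining ancilla qubits through a stray Fibonacci anyon. That stray anyon cannot be ``absorbed into the teleportation caps'' either---a cap is a projection onto the vacuum channel, and a single charge-$1$ anyon has no vacuum channel to project onto. The paper's recovery is instead tailored to the structure of $\ket{\Gamma}$: after a failed second fusion on the left, one first \emph{disentangles} by projecting the middle ancillary qubit onto the standard basis (fusing its left pair of anyons), then, if the outcome is $\ket{1}$, applies five half twists (the $Z$ gate) to the fourth and fifth anyons from the left to cancel the extra twists distinguishing $\ket{\beta}$ from $\ket{\alpha}$, and finally fuses the third and fourth anyons, which is then \emph{guaranteed} to yield charge $1$; this returns the left input qubit exactly to its original state, after which the exposed remnant of $\ket{\Gamma}$ is unentangled with the inputs and can be discarded. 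Without an argument of this kind your protocol, upon failure, leaves the input qubits in a disturbed state that it has no mechanism to undo.
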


\begin{proof}
  Position the ancilla $\ket{\Gamma}$
  between the left and right input qubits.
  In a successful application of the protocol,
  we perform four fusions of pairs of anyons,
  fusing to the vacuum in each case.

  The first fusion is of
  the adjacent anyons from the left input and the ancilla.
  We can make these fuse to the vacuum with probability one
  by the ``forced measurement'' procedure
  described in \cite{bfn08}.
  The idea is to alternate between
  measuring the collective charge of the two anyons we want to fuse,
  and of the four anyons that form the input qubit.
  With probability one,
  we will eventually measure the two anyons
  to have trivial collective charge.
  They are then guaranteed to fuse to the vacuum.

  The second fusion is of the adjacent anyons
  from what remains of the left input qubit
  and the left qubit of the ancilla.
  We want these to fuse to the vacuum.
  Suppose they instead fuse to a single unwanted anyon. Projection of the middle ancillary qubit, possibly followed by some application of the $Z$ gate, allows for a fusion with a guaranteed outcome aimed at retrieving an intact left input qubit. Explicitly,
  measure the middle ancillary qubit by fusing its left pair of anyons,
  thus projecting it to either $\ket{0}$ or $\ket{1}$.
  If it projects to $\ket{1}$ then
  apply five half twists to the fourth and fifth anyons from the left
  (either five clockwise
  or five counterclockwise half twists will work,
  since these have the same effect).
  Complete the recovery by fusing the third and fourth anyons,
  necessarily to $1$.
  We retrieve an intact left input qubit.
  Two of the qubits from $\ket{\Gamma}$
  have been destroyed in the process,
  and what remains of $\ket{\Gamma}$
  is not entangled with the input qubits, and can therefore be discarded.

  Suppose the left input qubit is fused with the left qubit of $\ket{\Gamma}$.
  The third fusion is of the adjacent anyons of the right input qubit
  and the ancilla.
  This is achieved by forced measurement, as before.

  The fourth and final fusion is of the adjacent anyons
  from what remains of the right input qubit
  and the right qubit of the ancilla.
  If they fuse to a single unwanted anyon,
  the recovery is similar to before.
  Project the untouched ancillary qubit
  to $\ket{0}$ or $\ket{1}$.
  If it projects to $\ket{1}$ then
  perform five half twists on the fourth and fifth anyons from the right,
  and also on the third and fourth anyons from the left.
  Complete the recovery by fusing the third and fourth anyons from the right,
  necessarily to a single anyon.
\end{proof}

Once we have successfully fused $\ket{\Gamma}$ into the two input qubits, the left and right qubits of $\ket{\Gamma}$ now carry the information
that was contained in the input qubits, and there is a remaining ancillary qubit in the middle. If we were to project that ancilla onto the
standard basis then we would randomly perform either the identity or $Z \otimes Z$ on the input qubits. We will instead project the ancilla onto a
different basis, which will randomly perform one of the following entangling gates.

\begin{definition}
  Let $G_1$ and $G_2$ be the following gates
  \begin{eqnarray*}
    G_1 &=& \Diag( z_1, \overline{z}_1, \overline{z}_1, z_1), \\
    G_2 &=& \Diag( z_2, \overline{z}_2, \overline{z}_2, z_2),
  \end{eqnarray*}
  where
  \begin{eqnarray*}
    z_1 &=& \cos \frac{2 \pi}{5} - i \sin \frac{2 \pi}{5}, \\
    z_2 &=& \phi \cos \frac{2 \pi}{5} + i \sin \frac{2 \pi}{5}.
  \end{eqnarray*}
\end{definition}

\begin{lemma} \label{lem:randomgate}
  Given two input qubits,
  there is a protocol that
  randomly performs either $G_1$ or $G_2$.
\end{lemma}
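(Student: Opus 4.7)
The plan is to start from the state produced by the previous lemma, where with nonzero probability the two input qubits have been successfully fused into the outer qubits of $\ket{\Gamma}$, and then perform a single additional measurement on the middle ancilla qubit.

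First I would write down the post-fusion state. Because $\ket{\alpha}$ is a nested-cup diagram, fusing an input qubit through it is a sideways identity and transfers the input unchanged; because $\ket{\beta}$ is the same diagram decorated with five half twists and $\sigma_1^5 = Z$, fusing through $\ket{\beta}$ applies $Z$. Substituting the formula for $\ket{\Gamma}$, the post-fusion state of the two transferred input qubits together with the untouched middle ancilla is, up to a nonzero scalar,
\[
  \phi^{1/2}\cos(2\pi/5)\,\ket{\psi_L\psi_R}\otimes\ket{0}
  \;-\; i\sin(2\pi/5)\,(Z\otimes Z)\ket{\psi_L\psi_R}\otimes\ket{1},
\]
where the middle tensor factor is the middle ancilla qubit.

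Next I would fuse the two middle anyons of the middle qubit. Rather than the standard-basis measurement (which would only produce $I\otimes I$ or $Z\otimes Z$), this projects the middle qubit onto the alternative basis $\{\ket{0'},\ket{1'}\}$ indexed by the collective charge of that pair; the given $F$ matrix expands $\ket{0}=\phi^{-1}\ket{0'}+\phi^{-1/2}\ket{1'}$ and $\ket{1}=\phi^{-1/2}\ket{0'}-\phi^{-1}\ket{1'}$. Substituting and regrouping, the $\ket{0'}$ amplitude becomes $\phi^{-1/2}\,G_1\ket{\psi_L\psi_R}$ and the $\ket{1'}$ amplitude becomes $\phi^{-1}\,G_2\ket{\psi_L\psi_R}$; since states are defined only up to a nonzero scalar, the vacuum outcome and the charge-one outcome realize $G_1$ and $G_2$ exactly.

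The main obstacle is the first step: rigorously establishing the diagrammatic identities that fusion through $\ket{\alpha}$ is the identity and through $\ket{\beta}$ is $Z$, with the exact scalar factors needed to preserve the ratio $\phi^{1/2}\cos(2\pi/5) : -i\sin(2\pi/5)$ between the two branches after the forced measurements used in the previous lemma. Once that is in hand, the remaining calculation is a mechanical application of the $F$ matrix.
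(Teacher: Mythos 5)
Your proposal is correct and follows essentially the same route as the paper: fuse the middle two anyons of the remaining ancillary qubit, which projects onto the $F$-matrix basis rather than the standard basis, and read off $G_1$ (vacuum outcome) or $G_2$ (single-anyon outcome) up to overall scalar. You actually supply more of the diagrammatic calculation than the paper does, and your amplitudes $\phi^{-1/2}G_1$ and $\phi^{-1}G_2$ check out against the stated $z_1$ and $z_2$.
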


\begin{proof}
  Use the configuration from previous lemma, in which the left and right input
  qubits have been fused with the left and right qubits of $\ket{\Gamma}$.
  Fuse the middle two anyons of the remaining ancillary qubit.
  They can either fuse to the vacuum or to a single anyon.
  In either case,
  we can diagrammatically calculate the effect on the input qubits.
  If they fused to the vacuum then we have performed the gate $G_1$.
  If they fused to a single anyon then we have performed the gate $G_2$.
  Finally, dispose of the remaining two or three anyons from the middle ancilla.
\end{proof}

\begin{lemma} \label{lem:randominverse}
  Given two input qubits,
  there is a protocol that
  randomly performs either $G_1^{-1}$ or $G_2^{-1}$.
\end{lemma}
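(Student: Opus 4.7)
The plan is to build a complex-conjugate version $\ket{\overline{\Gamma}}$ of the ancilla, rerun the protocol of Lemma \ref{lem:randomgate} with it, and then observe that each $\overline{G}_i$ equals $G_i^{-1}$ projectively.

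First I would define
$$\ket{\overline{\Gamma}}
  = \phi^{1/2}\cos\frac{2\pi}{5}\ket{\alpha 0 \alpha}
  + i\sin\frac{2\pi}{5}\ket{\beta 1 \beta}$$
and prepare it by the same recipe that builds $\ket{\Gamma}$, except that the middle qubit is created as $\sigma_1^{-2}\sigma_2^{2}\ket{0}$ in place of $\sigma_1^{2}\sigma_2^{-2}\ket{0}$. The point is that $R = \sigma_1$ is a diagonal unitary with $\overline{R} = R^{-1}$, and likewise $\sigma_2 = F R F^{-1}$ satisfies $\overline{\sigma}_2 = \sigma_2^{-1}$ because $F$ is real. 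Thus inverting each generator in a braid word acting on the real input $\ket{0}$ produces the complex-conjugate state. Since the flanking qubits $\ket{\alpha}$ and the two $CZ$ gates used in the preparation are also real, the overall effect of the modified recipe is exactly to complex-conjugate $\ket{\Gamma}$.

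Next I would execute the protocol of Lemma \ref{lem:randomgate} verbatim, but with $\ket{\overline{\Gamma}}$ in place of $\ket{\Gamma}$. Because the fusion sequence and the final middle fusion depend linearly on the ancilla coefficients, and the underlying diagrammatic calculations are unchanged, the protocol performs either $\overline{G}_1$ or $\overline{G}_2$ with nonzero probability, according to the outcome of the middle fusion.

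Finally, a direct computation gives
$$G_i \, \overline{G}_i
  = \Diag(|z_i|^2, |z_i|^2, |z_i|^2, |z_i|^2)
  = |z_i|^2\, I,$$
so $\overline{G}_i = |z_i|^2 \, G_i^{-1}$. Since states and gates in this paper are defined only up to a nonzero scalar, $\overline{G}_i$ and $G_i^{-1}$ represent the same projective transformation, and the modified protocol therefore randomly performs $G_1^{-1}$ or $G_2^{-1}$. The main thing I would want to check carefully is that the forced-measurement and recovery substeps invoked inside Lemma \ref{lem:randomgate} still behave correctly with $\ket{\overline{\Gamma}}$; since those substeps only use that the ancilla has the general form $c_0 \ket{\alpha 0 \alpha} + c_1 \ket{\beta 1 \beta}$ with $c_0, c_1 \neq 0$, this transport should be routine.
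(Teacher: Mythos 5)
Your proposal is correct and takes essentially the same route as the paper: conjugate the ancilla, rerun the protocol of Lemma \ref{lem:randomgate}, and use that $\overline{G}_i$ equals $G_i^{-1}$ up to an overall scalar. The only cosmetic difference is in how the conjugate ancilla is produced --- the paper applies a $Z$ gate to the middle qubit of an already-prepared $\ket{\Gamma}$, while you rebuild it with the inverted braid word $\sigma_1^{-2}\sigma_2^{2}$; both yield the same state.
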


\begin{proof}
  The inverses of $G_1$ and $G_2$
  are their complex conjugates,
  up to an overall scalar.
  Follow the same protocol as in the previous lemma,
  but use the complex conjugate of $\ket{\Gamma}$.
  In order to create the complex conjugate of $\ket{\Gamma}$,
  simply apply a $Z$ gate to the middle qubit of $\ket{\Gamma}$.
\end{proof}

At this stage, we can randomly perform one of two entangling gates, and we can randomly perform one of their inverses. We use a random walk to
perform a specific entangling gate, determined ahead of time.

\begin{lemma}
  Given two input qubits,
  there is a protocol that performs the gate $G_1$
  with probability one.
\end{lemma}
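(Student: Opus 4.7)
My plan is to drive a random walk on $\mathbb{Z}^2$ using Lemmas~\ref{lem:randomgate} and~\ref{lem:randominverse}, and argue that it hits the point corresponding to $G_1$ with probability one. Since $G_1$ and $G_2$ commute (both being diagonal), every gate produced by iterating the two protocols has the form $G_1^a G_2^b$; identify $(a,b) \in \mathbb{Z}^2$ with the state of the walk, starting at $(0,0)$ and targeting $(1,0)$. One call of Lemma~\ref{lem:randomgate} shifts the state by $(+1,0)$ or $(0,+1)$ at random, while one call of Lemma~\ref{lem:randominverse} shifts it by $(-1,0)$ or $(0,-1)$ at random.

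The structural observation is that $a+b$ changes deterministically ($+1$ under a gate call, $-1$ under an inverse call), whereas $a-b$ changes by $\pm 1$ at random on every call, regardless of which protocol is used. Thus we control the sum deterministically while the difference performs a nearest-neighbor random walk.

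The concrete protocol I would use is: call Lemma~\ref{lem:randomgate} once; with positive probability we land at $(1,0)$ and stop, otherwise we are at $(0,1)$. In the latter case, alternate Lemma~\ref{lem:randominverse} with Lemma~\ref{lem:randomgate}, keeping $a+b \in \{0,1\}$, and stop the first time a gate step leaves the state at $(1,0)$. Equivalently, we wait for the induced $\pm 1$ random walk of $a-b$ on $\mathbb{Z}$ to hit the value $1$ during a gate step.

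The main obstacle is proving almost-sure termination. Both outcomes of each protocol occur with strictly positive probability in every input state (the defining amplitudes of $\ket{\Gamma}$, namely $\phi^{1/2}\cos(2\pi/5)$ and $-i\sin(2\pi/5)$, are both nonzero), so the walk on $a-b$ is irreducible. The delicate point is that its transition probabilities depend on the state of the input qubits, which is itself transformed by each applied gate; I would establish recurrence by exhibiting a uniform positive lower bound on the probability of a hit within any fixed-length block of moves and invoking Borel--Cantelli on disjoint blocks, so that eventually such a hit occurs almost surely.
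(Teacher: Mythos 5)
Your reduction to a random walk on the exponents $(a,b)$ of $G_1^aG_2^b$, and your alternation rule keeping $a+b\in\{0,1\}$, match the paper's proof exactly. The gap is in the termination argument, which is the real content of the lemma. You run the walk on the infinite state space $\mathbb{Z}$ (for $a-b$) and propose to prove almost-sure hitting via ``a uniform positive lower bound on the probability of a hit within any fixed-length block'' plus Borel--Cantelli. That cannot work as stated: from a position far from the target, no block of bounded length can reach the target at all, so no such uniform lower bound exists. More fundamentally, a nearest-neighbour walk on $\mathbb{Z}$ whose step distribution is allowed to depend on the (evolving, hard-to-control) quantum state need not be recurrent even if every transition probability is bounded away from $0$ and $1$ --- a persistent drift of the wrong sign would let it escape to infinity without ever hitting $1$. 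Irreducibility, which is all your ``both amplitudes are nonzero'' observation gives you, does not imply recurrence on an infinite state space.

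The missing idea is that $G_1=\Diag(e^{-2\pi i/5},e^{2\pi i/5},e^{2\pi i/5},e^{-2\pi i/5})$ has order five, so the exponent $a$ only matters modulo $5$. The paper uses this to collapse your walk to the ten states $(k,l)$ with $0\le k<5$ and $l\in\{0,1\}$. On a \emph{finite} state space the argument you were reaching for does close: from every state there is a path of bounded length to the target each of whose steps has positive probability, and since both fusion outcomes occur with probability bounded below by a constant independent of the input state, absorption at the target happens with probability one. (As an aside, your worry that the transition probabilities vary with the input state is in fact unfounded here: both $G_1$ and $G_2$ are diagonal with all entries of equal modulus, so each outcome probability is a constant; one could exploit this to show your walk on $\mathbb{Z}$ is recurrent after all, but that requires a computation you did not do, and the order-five reduction avoids it entirely.)
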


\begin{proof}
  Apply the protocol from Lemma \ref{lem:randomgate}.
  If it performed $G_1$ then we are done.
  Suppose instead it performed the gate $G_2$.

  Now apply the protocol from Lemma \ref{lem:randominverse}.
  If it applied $G_2^{-1}$
  then the input qubits have been restored to their original state.
  If it applied $G_1^{-1}$
  then we have, in total, applied
  $G_1^{-1} G_2$ to the original qubits.

  Continue a process of applying
  Lemmas \ref{lem:randomgate} and \ref{lem:randominverse}.
  Note that diagonal matrices commute,
  so at any given stage
  we will have performed an operation of the form $G_1^k G_2^l$.
  Since $G_1$ has order five,
  we can assume $0 \le k < 5$.
  Also, let us choose our protocols
  to ensure that $l$ is always either $0$ or $1$.
  Specifically,
  if $l = 0$ then apply Lemma \ref{lem:randomgate},
  and if $l = 1$ then apply Lemma \ref{lem:randominverse}.

  The above protocol performs
  a random walk on a set of ten states.
  With probability one,
  it will eventually have performed the gate $G_1$.
\end{proof}

The gate $G_1$ is not exactly the controlled rotation gate we promised, but this is easily fixed.

\begin{proof}[Proof of Theorem \ref{THEOREM}]
  The gate $CR(\frac{2 \pi}{5})$
  is the braid $R^{-2} \otimes R^{-2}$
  composed with the gate $G_1$.
\end{proof}

\end{document}